\documentclass[pra,aps,superscriptaddress,footinbib,twocolumn]{revtex4-2}

\usepackage[dvips]{graphicx} 
\usepackage{amsfonts}
\usepackage{amssymb}
\usepackage{amscd}
\usepackage{amsmath}    
\usepackage{enumerate}
\usepackage{epsfig}
\usepackage{subfigure}
\usepackage{bm}
\usepackage{xcolor}
\usepackage{amsthm}
\usepackage{framed}
\usepackage{multirow}
\usepackage{mathrsfs,amssymb}
\usepackage{latexsym} 
\usepackage{physics}
\usepackage{epstopdf}
\usepackage[skins]{tcolorbox}
\newtcolorbox[auto counter]{tbox}[2][]{%
    enhanced, float=hbt, drop fuzzy shadow southeast,
    colback=white!5!white, colframe=white!50!black,
    width=.97\columnwidth,sharp corners, boxrule=0.8pt,
    title={Table \thetcbcounter: #2}, #1
}

\newtheorem{lemma}{Lemma}
\newtheorem{corollary}{Corollary}

\begin{document}

\title{Continuous-Variable Source-Independent Quantum Random Number Generation with General POVMs}

\author{Hongyi Zhou}
\affiliation{State Key Lab of Processors, Institute of Computing Technology, Chinese Academy of Sciences, 100190, Beijing, China.}

\author{Yu Han}
\affiliation{Henan Key Laboratory of Network Cryptography Technology, Zhengzhou, Henan, 450001, China}

\author{Qiechun Chen}
\affiliation{ZQuantum (Beijing) Technology Co., Ltd, Beijing 100871, China}

\author{Leilei Huang}
\email{huangleilei@zzqtech.com}
\affiliation{ZQuantum (Beijing) Technology Co., Ltd, Beijing 100871, China}

\begin{abstract}
Continuous-variable source-independent quantum random number generators  offer the highest generation rates among semi-device-independent protocols. In reality, the protocol design is limited due to permissible measurement configurations. In this work, we propose a rigorous security proof framework that accommodates general, infinite-dimensional positive-operator-valued measures. Building upon the numerical security proof framework, we evaluate the randomness lower bound by maximizing the eavesdropper's guessing probability. Specifically, we transform the inherently infinite-dimensional semidefinite program in Fock space into a tractable finite-dimensional one, rigorously proving that latter provides a strict upper bound to the guessing probability of the original infinite-dimensional problem. Our framework showcases its capability by certifying secure randomness using unbalanced homodyne detection with only a single quadrature measurement, thereby bypassing the traditional requirement of measuring two conjugate quadratures such as $X$ and $P$. We experimentally validate our protocol on an optical platform using vacuum and weak coherent states, achieving a maximum secure randomness extraction of 1.11 bits per sample and an ultra-high generation rate of 1.776 Gbps. This work provides a flexible design for practical, high-speed quantum random number generators.
\end{abstract}

\maketitle
\section{introduction}
Quantum random number generators (QRNGs) harness the inherent probabilistic nature of quantum mechanics to produce true random numbers, which are essential for cryptographic applications \cite{MaQRNGReview16,RMPQRNGReview16}.
While fully device-independent (DI) QRNGs offer the highest level of security by making no assumptions about the internal workings of the devices, their practical generation rates are extremely low \cite{liu2018device,liu2021device,shalm2021device,PhysRevLett.126.050503}.
On the other hand, trusted-device QRNGs achieve high speeds but are vulnerable to side-channel attacks if the devices are imperfectly characterized \cite{Gabriel10,Nie15,bruynsteen2023integrated}.

To achieve a practical trade-off between uncompromising security and generation rate, semi-device-independent (semi-DI) QRNGs have emerged as a highly promising paradigm \cite{Ma16,PhysRevLett.118.060503,li2019QRNG,PhysRevX.10.041048,avesani2018source,smith2019simple,cao2015loss,Banik15,vsupic2017measurement,bischof2017measurement,nie2024measurement,brask2017megahertz,Brunner15,PhysRevApplied.15.034034,PhysRevA.100.062338,tebyanian2021practical,joch2022certified,ioannou2022steering,zhang2025one}. The semi-DI framework provides robust security guarantees without necessitating demanding Bell-nonlocality tests by introducing justifiable, verifiable physical assumptions on partial components of the system. Various semi-DI architectures have been proposed, including source-independent (SI) QRNGs with trusted  measurement devices and untrusted source \cite{Ma16,PhysRevLett.118.060503,li2019QRNG,PhysRevX.10.041048,avesani2018source,smith2019simple}, measurement-device-independent (MDI) QRNGs with trusted source and untrusted measurement devices \cite{cao2015loss,Banik15,vsupic2017measurement,bischof2017measurement,nie2024measurement}, bounded-energy QRNGs which rely exclusively on restricting the mean photon number of the emitted states \cite{PhysRevA.100.062338,tebyanian2021practical}, and steering-based one-sided device-independent (1SDI) QRNGs where nonlocality is certified assuming only one party's device is characterized \cite{joch2022certified,ioannou2022steering,zhang2025one}. Among these diverse semi-DI categories, SI-QRNGs demonstrate the unique capability of achieving the highest generation rates \cite{avesani2018source,zhang2023realization}.

The development of SI-QRNG protocols represents a critical evolution from discrete-variable (DV) to continuous-variable (CV) regimes. The original source-independent protocol utilized discrete-variable measurements where the security was rigorously established based on the entropic uncertainty relation for conjugate observables \cite{Ma16}. In practice, the operational speeds of such DV protocols were fundamentally bottlenecked by the intrinsic dead time and relatively low detection efficiencies of single-photon detectors. To overcome these constraints, the SI-QRNG framework was subsequently generalized to continuous variables \cite{PhysRevLett.118.060503}. CV measurements employ linear photodetectors that are free from dead-time restrictions, permitting ultra-high sampling bandwidths. Furthermore, CV measurements enables more than 1 bit of genuine randomness per detection event via multi-bit analog-to-digital quantization. These properties enable high-speed randomness generation of CV-SI-QRNGs.

The mathematical core of the security analysis for these advanced SI-QRNGs lies in evaluating the conditional min-entropy, which rigorously bounds the information leakage accessible to an eavesdropper. Analytical security proof methods rely on entropic uncertainty relations of a pair of conjugate observables, such as Pauli $X$ and $Z$ operators, The extremely limited number of such conjugate pairs imposes fundamental constraints on protocol design. Recently, a versatile numerical framework was developed for SI-QRNGs and MDI-QRNGs \cite{PhysRevA.107.052402}. This framework transforms the complex problem of finding the randomness lower bound into a semidefinite programming (SDP) optimization, successfully enabling the automated analysis of various protocols. However, a direct application of this numerical approach to CV-SI-QRNGs encounters a fundamental mathematical obstacle. In CV optical systems, both the adversarial quantum states and the trusted measurement operators are naturally defined in an infinite-dimensional Fock space. Consequently, the SDP formulation becomes intrinsically infinite-dimensional and infeasible.  

To circumvent the infinite-dimensional issue, a natural idea is to make a photon number cutoff assumption, which enforces the infinite-dimensional problem into a truncated finite-dimensional one \cite{lin2019asymptotic}. Another way is to make artificial dimensional reductions, most notably the squashing model \cite{PhysRevLett.101.093601}. This method attempts to construct a virtual completely positive trace-preserving (CPTP) map that mathematically projects the infinite-dimensional Hilbert space into a finite-dimensional subspace (e.g., qubits), ensuring statistical consistency in the measurement outcomes. However, the original squashing model cannot be applied to general protocols (such as the six-state protocol). Subsequent frameworks attempted to broaden this applicability while either resulting in pessimistic key rates \cite{fung2011universal} or introducing more assumptions on measurement operators \cite{li2020improving}. 
Recently, based on the dimension reduction technique \cite{upadhyaya2021dimension}, a partial solution to the numerical security proof problem for CV-SI-QRNGs has been achieved under the phase-insensitive detection assumption \cite{10933507}, which mathematically requires that all positive-operator-valued measures (POVM) elements are diagonal in the Fock basis. This method transforms the intractable infinite-dimensional SDP into a tractable finite-dimensional version through a modified formulation rather than a trivial photon-number cutoff. Crucially, it has been rigorously proven that the optimal value of the resulting finite-dimensional SDP provides a valid upper bound on the guessing probability characterized by the infinite-dimensional matrix linear functional.
However, the general security proof for CV-SI-QRNGs, which allows for arbitrary infinite-dimensional POVMs at the measurement side, remains an outstanding open problem.

In this work, we propose a rigorous security proof framework for CV-SI-QRNGs that accommodates general infinite-dimensional POVMs. By entirely removing the diagonal matrix constraints, our method enables evaluations of cross-interference terms. Combining matrix norm inequalities, we successfully reduce the intractable infinite-dimensional optimization into a solvable finite-dimensional SDP. The latter provides a strict upper bound to the original guessing probability in infinite-dimensional Fock space. Furthermore, we integrate an unbalanced homodyne detection scheme with realistic detector efficiencies into our theoretical model, securing a strictly positive lower bound for randomness. This protocol design requires only single-quadrature measurement. We experimentally validate our protocol on an optical platform using vacuum and weak coherent states, demonstrating an ultra-high secure generation rate of 1.776 Gbps.
\section{Continuous Variable Source-Independent QRNG}
We briefly describe a general CV-SI-QRNG protocol.
First, an untrusted source sends an unknown, infinite-dimensional quantum state to the homodyne detector. Subsequently, an arbitrary measurement is performed where the measurement results are discretized into $m$ intervals, characterized by a POVM set $\{M_j\}_{j=1}^m$. After repeating these state transmission and measurement steps, Alice records the probability distribution of the outcomes, denoted by $p_j$, and proceeds to calculate the upper bound of the guessing probability using a specifically formulated semidefinite programming (SDP) approach.

To quantify the randomness in the asymptotic limit, we first consider the finite-dimensional case. 
The lower bound of randomness is quantified by the guessing probability, 
\begin{equation}
p_{\mathrm{guess}} = \max_{\{\rho_k\}} \sum_{k=1}^m \Tr(\rho_k M_k),
\end{equation}
where $\rho_k$ represents a sub-normalized quantum state $\rho_k =\sum_{i\in S_k}p_i\ket{\psi_i}\bra{\psi_i}$ corresponding to the eavesdropper's knowledge.
To apply numerical methods for computing the upper bound of this guessing probability, one typically sets a photon-number cutoff $N$.
Consequently, the state variables and POVM elements $M_j$ become $N$-dimensional matrices, leading to the following SDP:
\begin{equation}\label{eq:SDPprimalSI}
\begin{aligned}
\max_{\{\rho_k\}} \quad & \sum_{k=1}^m \Tr(\rho_k M_k) \\
\mathrm{s.t.} \quad & \Tr\Big(M_j\sum_{k=1}^m\rho_k\Big) = p_j, \quad \forall j \\
& \Tr\Big(\sum_{k=1}^m \rho_k\Big) = 1 \\
& \rho_k \succeq 0, \quad \forall k,
\end{aligned}
\end{equation}
where the first constraint ensures that the unknown source is compatible with the experimental statistics $p_j$, and the second constraint is the normalization condition.
By solving the SDP problem in Eq.~\eqref{eq:SDPprimalSI}, one obtains the upper bound of the guessing probability $p_{\mathrm{guess}}^U$ in the asymptotic limit.
The extractable randomness is then quantified by the conditional min-entropy:
\begin{equation}\label{eq:randomnessasym}
H_{\mathrm{min}}(A|E) \geq -\log_2 p^U_{\mathrm{guess}}.
\end{equation}

To extend Eq.~\eqref{eq:SDPprimalSI} to CV systems, A primary challenge is that the unknown quantum states, as well as the trusted POVMs, are infinite-dimensional in the Fock basis. Therefore, Eq.~\eqref{eq:SDPprimalSI} for CV systems originally represents an infinite-dimensional SDP, which is computationally intractable. A previous attempt to bypass this was to assume that the detector is phase-insensitive, meaning all POVM elements are diagonal in the Fock basis \cite{10933507}. In this work, we remove this assumption and provide a rigorous randomness lower bound that is valid for general POVMs.

\section{CV-SI-QRNG with General POVM}

\begin{figure}[htp]
\centering
\includegraphics[width=0.5\textwidth]{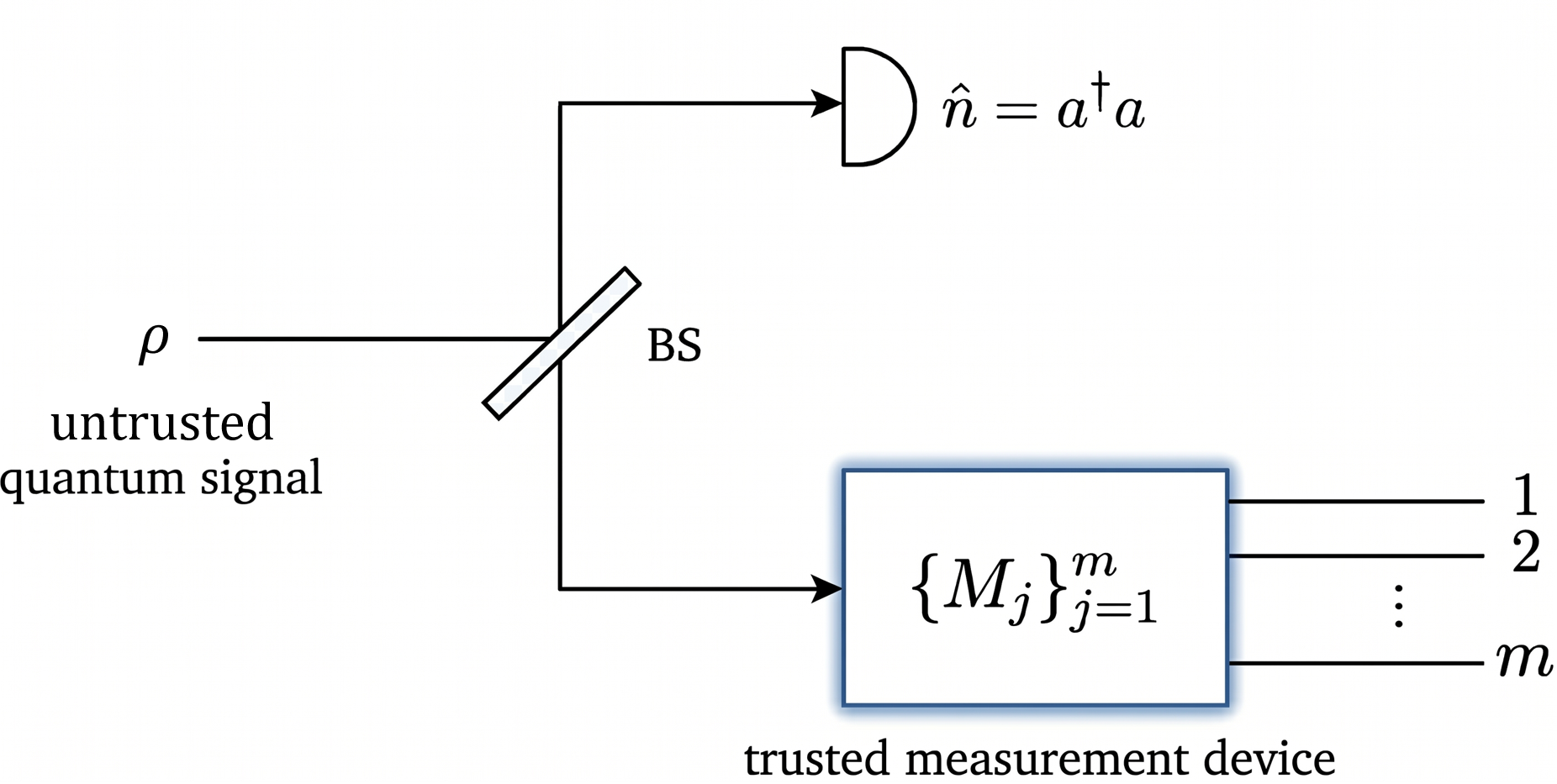}
\caption{Schematic of the general CV-SI-QRNG protocol. The setup consists of an untrusted quantum signal and a fully trusted measurement device. In contrast to discrete-variable schemes, CV schemes require an explicit estimation of the energy components outside the truncated subspace; thus, a beam splitter (BS) diverts a fraction of the signal to a photon number detector ($\hat{n}=a^\dagger a$) to bound the high-energy tail. The trusted measurement device subsequently executes an arbitrary set of POVMs $\{M_j\}_{j=1}^m$. This process inherently dismisses the phase-insensitive assumption, supporting a full matrix formulation that includes cross-interference terms in the SDP analysis.}
\label{fig:protocol}
\end{figure}

The CV-SI-QRNG scheme is illustrated in Fig.~\ref{fig:protocol}.  
To formulate the CV-SI-QRNG security with general POVMs, let $\{M_j\}_{j=1}^m$ represent the general POVM elements obtained from binning a fixed-phase quadrature, such that $0\le M_j\le I$ and $\sum_j M_j = I$. Based on these POVMs, we can establish a finite-dimensional SDP by fixing a Fock cutoff $N$ and defining the orthogonal projectors for the truncated subspace and its complement:

\begin{equation}
\begin{aligned}
P &:= \sum_{n=0}^{N-1}\lvert n\rangle\langle n\rvert, \\
\qquad Q & := I-P.
\end{aligned}
\end{equation}
For any operator $X$, we can thus represent it in a $2\times2$ block form:
\begin{equation}
X=\begin{pmatrix}X_{PP}&X_{PQ}\\ X_{QP}&X_{QQ}\end{pmatrix},
\end{equation}
where $X_{PP}:=PXP$, $X_{PQ}:=PXQ$, and so forth.  With this truncation established, we can systematically define the quantum states alongside their energy constraints. We write the overall normalized state as a decomposition $\rho=\sum_{k=1}^m \rho_k$ with $\rho_k\succeq 0$ , and define the probabilities within and outside the truncated subspace as:
\begin{equation}\label{eq:weights}
\begin{aligned}
a &:= \Tr(P\rho),\\
b &:= \Tr(Q\rho) = 1-a.
\end{aligned}
\end{equation}
We assume that the mean-photon-number is bounded by an observable upper limit, i.e., $\Tr(\rho\,\hat n)\le \mu^\mathrm{U}$. Since the number operator satisfies $\hat n\ge N\,Q$ in the orthogonal complement space, we obtain a rigorous tail bound using Markov's inequality:
\begin{equation}
b = \Tr(Q\rho) \le \frac{\mu^\mathrm{U}}{N}.
\label{eq:tailMarkov}
\end{equation}  To effectively bound the cross-interference terms and analyze the finite-dimensional spectra, we define the maximum eigenvalues of the truncated POVM blocks:
\begin{equation}
s_N := \max_j ||M_{j,PP}||_\infty = \max_j \lambda_{\max}(P M_j P).
\end{equation}
Because $0\le M_j\le I$, we always have $0\le s_N\le 1$. To bound the truncation error rigorously, we rely on the following two foundational lemmas.  
\begin{lemma}\label{lemma:trace-norm}
For any positive semidefinite block operator
\begin{equation}
\rho_k=\begin{pmatrix}\rho_{k,PP}&\rho_{k,PQ}\\ \rho_{k,QP}&\rho_{k,QQ}\end{pmatrix}\succeq 0,
\end{equation}
the trace norm of its off-diagonal block is bounded by:
\begin{equation}
\|\rho_{k,QP}\|_1 \le \sqrt{\Tr(\rho_{k,PP})\,\Tr(\rho_{k,QQ})}.
\label{eq:rhoPQtrace}
\end{equation}
\end{lemma}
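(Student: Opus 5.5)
The plan is to use the standard structure theorem for positive semidefinite $2\times 2$ block operators. Since $\rho_k\succeq 0$, we can write $\rho_k = R^\dagger R$ with $R=\rho_k^{1/2}$. Decompose $R$ into its columns relative to the splitting $P\oplus Q$, namely $A:=RP$ and $B:=RQ$. Then the blocks read
\begin{equation}
\rho_{k,PP}=A^\dagger A,\qquad \rho_{k,QQ}=B^\dagger B,\qquad \rho_{k,QP}=B^\dagger A .
\end{equation}
This Gram factorization turns the off-diagonal block into a product of two Hilbert--Schmidt operators. In the infinite-dimensional setting, $\rho_k$ is trace class, so $R$ is Hilbert--Schmidt and hence so are $A$ and $B$. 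This makes every quantity below finite and well defined.

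The key step is Hölder's inequality for Schatten norms, $\|XY\|_1\le \|X\|_2\|Y\|_2$. Applied to $\rho_{k,QP}=B^\dagger A$ it gives
\begin{equation}
\|\rho_{k,QP}\|_1\le \|B\|_2\,\|A\|_2=\sqrt{\Tr(B^\dagger B)}\sqrt{\Tr(A^\dagger A)}=\sqrt{\Tr(\rho_{k,QQ})\,\Tr(\rho_{k,PP})},
\end{equation}
which is exactly Eq.~\eqref{eq:rhoPQtrace}.

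If one prefers to avoid Hölder, the same bound follows from the variational characterization $\|X\|_1=\sup_{\|U\|_\infty\le 1}|\Tr(UX)|$. Taking $U$ to be the partial isometry from the polar decomposition of $\rho_{k,QP}$, we have $\Tr(U B^\dagger A)=\langle BU^\dagger, A\rangle_{HS}$. Cauchy--Schwarz in the Hilbert--Schmidt inner product then bounds this by $\|BU^\dagger\|_2\|A\|_2\le \|B\|_2\|A\|_2$.

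The only real obstacle is rigor in infinite dimensions. One must check that $\rho_k^{1/2}$ is Hilbert--Schmidt (true because $\Tr\rho_k<\infty$), and that Hölder and the trace-norm duality hold for operators between the different subspaces $P\mathcal H$ and $Q\mathcal H$. Both facts are standard for Schatten classes on separable Hilbert spaces. A fallback is to prove the bound for the finite truncations $P_n\rho_kP_n$ and pass to the limit using lower semicontinuity of the trace norm. Nothing else in the argument is delicate; the factorization $\rho_k=R^\dagger R$ carries the whole proof.
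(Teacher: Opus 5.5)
Your proof is correct. It ends with the same Hilbert--Schmidt H\"older step as the paper, but it obtains the factorization of the off-diagonal block differently.

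The paper invokes Douglas' lemma to write $\rho_{k,QP}=\sqrt{\rho_{k,QQ}}\,W_k\sqrt{\rho_{k,PP}}$ with a contraction $W_k$. It then applies the three-factor bound $\|XWY\|_1\le\|X\|_2\|W\|_\infty\|Y\|_2$. You instead use the Gram factorization $\rho_{k,QP}=B^\dagger A$ with $A=\rho_k^{1/2}P$ and $B=\rho_k^{1/2}Q$. This follows directly from $\rho_k=\rho_k^{1/2}\rho_k^{1/2}$, with no existence theorem needed.

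The two routes are linked by polar decomposition. Writing $A=V_A\sqrt{\rho_{k,PP}}$ and $B=V_B\sqrt{\rho_{k,QQ}}$ gives $\rho_{k,QP}=\sqrt{\rho_{k,QQ}}\,(V_B^\dagger V_A)\sqrt{\rho_{k,PP}}$. So your construction produces the paper's contraction explicitly, $W_k=V_B^\dagger V_A$.

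What your route buys is self-containedness and clean infinite-dimensional rigor. Hilbert--Schmidt membership of $A$ and $B$ is immediate from $\Tr\rho_k<\infty$. Douglas' lemma on an infinite-dimensional space, by contrast, needs a range-inclusion argument that the paper does not spell out. The paper's route buys a one-line appeal to a standard structural fact about PSD block operators.

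Neither route is more general. Applied to $M_j$ in place of $\rho_k$, your factorization also yields the operator-norm bound of the next lemma, via $\|B^\dagger A\|_\infty\le\|B\|_\infty\|A\|_\infty=\sqrt{\|B^\dagger B\|_\infty\|A^\dagger A\|_\infty}$. The truncation fallback you mention is not needed.
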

\begin{proof}
By Douglas' lemma (or the operator Cauchy--Schwarz factorization for PSD block matrices), there exists a contraction $W_k$ with $\|W_k\|_\infty\le 1$ such that
\begin{equation}
\rho_{k,QP}=\sqrt{\rho_{k,QQ}}\,W_k\,\sqrt{\rho_{k,PP}}.
\end{equation}
Using Hölder's inequality for Schatten norms and unitary invariance, we have:
\begin{equation}
\begin{aligned}
\|\rho_{k,QP}\|_1 &\le \|\sqrt{\rho_{k,QQ}}\|_2\,\|W_k\|_\infty\,\|\sqrt{\rho_{k,PP}}\|_2 \\
&\le \|\sqrt{\rho_{k,QQ}}\|_2\,\|\sqrt{\rho_{k,PP}}\|_2.
\end{aligned}
\end{equation}
Since $\|X\|_2^2=\Tr(X^\dagger X)$, we obtain
\begin{equation}
\begin{aligned}
\|\sqrt{\rho_{k,QQ}}\|_2^2 &=\Tr(\rho_{k,QQ})  \\
\|\sqrt{\rho_{k,PP}}\|_2^2&=\Tr(\rho_{k,PP}),
\end{aligned}
\end{equation}
which yields Eq.~\eqref{eq:rhoPQtrace}.
\end{proof}

\begin{lemma}
For any POVM element satisfying $0\le M_j\le I$, we have:
\begin{equation}\label{eq:MPQmult}
\begin{aligned}
\|M_{j,QP}\|_\infty &\le \sqrt{\|M_{j,PP}\|_\infty\,\|M_{j,QQ}\|_\infty} \\
&\le \sqrt{\|M_{j,PP}\|_\infty}.
\end{aligned}
\end{equation}
\end{lemma}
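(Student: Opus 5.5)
We will follow the same factorization strategy used for Lemma~\ref{lemma:trace-norm}, this time measuring the off-diagonal block in operator norm instead of trace norm. Since $0\le M_j\le I$, the block operator
\begin{equation}
M_j=\begin{pmatrix}M_{j,PP}&M_{j,PQ}\\ M_{j,QP}&M_{j,QQ}\end{pmatrix}
\end{equation}
is positive semidefinite. Douglas' lemma (the operator Cauchy--Schwarz factorization for PSD $2\times 2$ block operators) then supplies a contraction $V_j$ with $\|V_j\|_\infty\le 1$ such that
\begin{equation}
M_{j,QP}=\sqrt{M_{j,QQ}}\,V_j\,\sqrt{M_{j,PP}}.
\end{equation}
In infinite dimensions this factorization still holds for bounded PSD block operators. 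The only care needed is that $V_j$ is defined on the closure of the range of $\sqrt{M_{j,PP}}$ and extended by zero, which keeps it a contraction. Alternatively, one can avoid Douglas' lemma altogether. For unit vectors $u\in Q\mathcal H$ and $v\in P\mathcal H$, apply the Cauchy--Schwarz inequality to the positive sesquilinear form $\langle x,M_j y\rangle$:
\begin{equation}
|\langle u,M_j v\rangle|^2\le \langle u,M_j u\rangle\langle v,M_j v\rangle .
\end{equation}
The right-hand side is at most $\|M_{j,QQ}\|_\infty\|M_{j,PP}\|_\infty$. Taking the supremum over $u$ and $v$ then gives the first inequality directly. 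I would present this vector argument as the main proof because it is elementary and dimension-free.

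If we use the factorization instead, submultiplicativity of the operator norm gives
\begin{equation}
\|M_{j,QP}\|_\infty\le \|\sqrt{M_{j,QQ}}\|_\infty\,\|V_j\|_\infty\,\|\sqrt{M_{j,PP}}\|_\infty\le \sqrt{\|M_{j,QQ}\|_\infty}\,\sqrt{\|M_{j,PP}\|_\infty}.
\end{equation}
This uses the identity $\|\sqrt{A}\|_\infty=\sqrt{\|A\|_\infty}$ for $A\succeq0$, which follows from the spectral theorem.

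The second inequality follows from $M_{j,QQ}=QM_jQ\le QIQ=Q$. Hence $0\le M_{j,QQ}\le Q$, so $\|M_{j,QQ}\|_\infty\le 1$, and substituting this into the first bound finishes the proof.

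The only potential obstacle is the rigor of the Douglas-type factorization for operators on the infinite-dimensional Fock space. Using the vector Cauchy--Schwarz route sidesteps it completely, so I expect no real difficulty.
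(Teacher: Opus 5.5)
Your proof is correct and has the same structure as the paper's: the paper applies the standard block-PSD inequality $\|B\|_\infty^2\le\|A\|_\infty\|C\|_\infty$ (which it says follows from Schur complements) to $M_j$, and then uses $0\le M_{j,QQ}\le Q$ exactly as you do. Your Cauchy--Schwarz bound $|\langle u,M_j v\rangle|^2\le\langle u,M_j u\rangle\langle v,M_j v\rangle$ is a self-contained, dimension-free proof of that cited inequality, and it needs no invertibility of the diagonal blocks, whereas a Schur-complement derivation would have to regularize, e.g.\ replace $M_{j,PP}$ by $M_{j,PP}+\epsilon P$ and let $\epsilon\to 0$.
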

\begin{proof}
We use the standard inequality for $2\times2$ PSD block operators:
\begin{equation}
\begin{pmatrix}A&B\\ B^\dagger&C\end{pmatrix}\succeq 0 \quad\Longrightarrow\quad \|B\|_\infty^2 \le \|A\|_\infty\,\|C\|_\infty,
\end{equation}
which can be derived from Schur complements.
Applying this to $M_j$ with $A=M_{j,PP}$, $B=M_{j,PQ}$, and $C=M_{j,QQ}$ gives
\begin{equation}
\|M_{j,PQ}\|_\infty \le \sqrt{\|M_{j,PP}\|_\infty\,\|M_{j,QQ}\|_\infty}.
\end{equation}
Since $0\le M_{j,QQ}\le Q$, it implies $\|M_{j,QQ}\|_\infty\le 1$, yielding the final inequality in Eq.~\eqref{eq:MPQmult}.
\end{proof}

\begin{corollary}
By summing the result of Lemma~\ref{lemma:trace-norm} over $k$ and applying the Cauchy--Schwarz inequality, we get:
\begin{equation}
\begin{aligned}
\sum_k \|\rho_{k,QP}\|_1 &\le \sum_k \sqrt{\Tr(\rho_{k,PP})\,\Tr(\rho_{k,QQ})} \\
&\le \sqrt{\Big(\sum_k \Tr(\rho_{k,PP})\Big)\Big(\sum_k \Tr(\rho_{k,QQ})\Big)} \\
&= \sqrt{ab}.
\end{aligned}
\end{equation}
Combining this with Eq.~\eqref{eq:MPQmult}, the cross-interference contribution $T_\times:=2\,\Re\sum_k \Tr(\rho_{k,QP}M_{k,PQ})$ is globally bounded by:
\begin{equation}\label{eq:TxBound}
\begin{aligned}
|T_\times| &\le 2\sum_k \|\rho_{k,QP}\|_1\,\|M_{k,PQ}\|_\infty \\
&\le 2\sqrt{s_N}\,\sqrt{ab},
\end{aligned}
\end{equation}   
where $a$ and $b$ are defined in Eq.~\eqref{eq:weights}.
\end{corollary}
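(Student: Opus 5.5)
The plan has two parts: first the summed trace-norm bound, then the cross-interference bound. For the first part, we apply Lemma~\ref{lemma:trace-norm} to each $\rho_k$ separately, which gives the first inequality term by term. Then we apply the Cauchy--Schwarz inequality in $\mathbb{R}^m$ to the vectors $x_k:=\sqrt{\Tr(\rho_{k,PP})}$ and $y_k:=\sqrt{\Tr(\rho_{k,QQ})}$. This yields $\sum_k x_k y_k\le \sqrt{\sum_k x_k^2}\sqrt{\sum_k y_k^2}$. To identify the two sums, we use linearity of the trace and $\rho=\sum_k\rho_k$: $\sum_k\Tr(\rho_{k,PP})=\Tr(P\rho P)=\Tr(P\rho)=a$, and likewise $\sum_k\Tr(\rho_{k,QQ})=\Tr(Q\rho)=b$, using $P^2=P$, $Q^2=Q$ and cyclicity. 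These are exactly the weights in Eq.~\eqref{eq:weights}, so the first chain closes with $\sqrt{ab}$.

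For $T_\times$, we first check that the real part of $\sum_k\Tr(\rho_{k,QP}M_{k,PQ})$ collects both off-diagonal contributions. Expanding $\Tr(\rho_kM_k)$ in the block form gives the cross terms $\Tr(\rho_{k,PQ}M_{k,QP})+\Tr(\rho_{k,QP}M_{k,PQ})$. Since $\rho_k$ and $M_k$ are Hermitian, these two terms are complex conjugates of each other, so their sum is $2\Re\Tr(\rho_{k,QP}M_{k,PQ})$. Next we use $|\Re z|\le|z|$ and the triangle inequality to get $|T_\times|\le 2\sum_k|\Tr(\rho_{k,QP}M_{k,PQ})|$. Each term is then bounded by the trace-norm/operator-norm H\"older inequality $|\Tr(XY)|\le\|X\|_1\|Y\|_\infty$, which holds on the infinite-dimensional space because $\rho_{k,QP}$ is trace class (it is a compression of the trace-class $\rho_k$) and $M_{k,PQ}$ is bounded. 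This gives the first line of Eq.~\eqref{eq:TxBound}.

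Finally, we bound $\|M_{k,PQ}\|_\infty$. We have $\|M_{k,PQ}\|_\infty=\|M_{k,QP}^\dagger\|_\infty=\|M_{k,QP}\|_\infty$, and by Eq.~\eqref{eq:MPQmult} this is at most $\sqrt{\|M_{k,PP}\|_\infty}\le\sqrt{s_N}$, using the definition of $s_N$ as a maximum over $j$. Pulling this uniform factor out of the sum and inserting the first chain gives $|T_\times|\le 2\sqrt{s_N}\sqrt{ab}$.

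I do not expect a real obstacle. The only delicate point is justifying the H\"older step and the trace cyclicity in infinite dimensions. Both go through because every $\rho_k\preceq\rho$ is trace class, and compressions by the projectors $P$ and $Q$ preserve trace-class membership. Keeping the uniform bound $\sqrt{s_N}$ (rather than $k$-dependent norms) is what allows the aggregate Cauchy--Schwarz bound to be used directly.
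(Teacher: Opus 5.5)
Your proposal is correct and takes the same route as the paper. You apply Lemma~\ref{lemma:trace-norm} term by term, then Cauchy--Schwarz with the identifications $\sum_k\Tr(\rho_{k,PP})=a$ and $\sum_k\Tr(\rho_{k,QQ})=b$, then H\"older together with the uniform bound $\|M_{k,PQ}\|_\infty\le\sqrt{s_N}$ from Eq.~\eqref{eq:MPQmult}. Your additional checks (the adjoint identity $\|M_{k,PQ}\|_\infty=\|M_{k,QP}\|_\infty$ and the trace-class justification in infinite dimensions) supply details the paper leaves implicit.
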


With these bounds, the infinite-dimensional guessing probability can be safely decomposed into block-wise contributions:
\begin{equation}\label{eq:obj}
\begin{aligned}
p_{\mathrm{guess}}^{(\infty)} =& \sum_{k=1}^m \Tr(\rho_{k,PP}M_{k,PP}) + \sum_{k=1}^m \Tr(\rho_{k,QQ}M_{k,QQ}) \\
&+ 2\,\Re\sum_{k=1}^m \Tr(\rho_{k,QP}M_{k,PQ})\\
=&\ T_P + T_Q + T_\times \\
\leq&\ T_P + b + 2\sqrt{s_N}\,\sqrt{ab} \\
=&\ \sum_{k=1}^m \Tr(\rho_{k,PP}M_{k,PP}) + 1 - \sum_{k=1}^m \Tr(\rho_{k,PP}) \\
&+ 2\sqrt{s_N}\,\sqrt{ab}.
\end{aligned}
\end{equation}
Recall the tail bound in Eq.~\eqref{eq:tailMarkov}, where $b \le \mu^\mathrm{U}/N$.
Since $a+b=1$, the term $\sqrt{ab}$ reaches its maximum when $b$ takes its upper bound (assuming $\mu^\mathrm{U}/N \le 1/2$).
Therefore, the cross term is bounded by $2\sqrt{s_N} \frac{\sqrt{\mu^\mathrm{U} (N-\mu^\mathrm{U})}}{N}$.
For the linear constraints representing the experimental statistics, we can similarly bound the deviations caused by the truncation:
\begin{equation}\label{eq:holder}
\begin{aligned}
    \Tr\Big(M_j\sum_{k=1}^m\rho_k\Big) =& \Tr\Big(M_{j,PP}\sum_{k=1}^m\rho_{k,PP}\Big)  \\& +\Tr\Big(M_{j,QQ}\sum_{k=1}^m\rho_{k,QQ}\Big) 
    + T_{\times, j} \\
    \leq& \Tr\Big(M_{j,PP}\sum_{k=1}^m\rho_{k,PP}\Big) \\
    &+ \Big\|\sum_{k=1}^m\rho_{k,QQ}\Big\|_1 \|M_{j,QQ}\|_\infty \\
    &+ 2\sqrt{s_j}\frac{\sqrt{\mu^\mathrm{U} (N-\mu^\mathrm{U})}}{N},
\end{aligned}
\end{equation}
where $s_j = \|M_{j,PP}\|_\infty$ is the maximum eigenvalue of the specific truncated POVM element $M_{j,PP}$ and $T_{\times, j}:= 2 \Re[M_{j,PQ} \sum_k \Tr(\rho_{k,QP})]$. The cross term $T_{\times, j}$ can be bounded similar to $T_{\times}$ by replacing $s_N$ with $s_j$.
By applying the lower and upper bounds of $T_{\times, j}$ and bounding the $QQ$ term via $\mu^\mathrm{U}/N$, we can safely confine the infinite-dimensional statistics within a finite-dimensional constraint band.
Finally, by incorporating all these analytical bounds, we can formulate a strictly finite-dimensional SDP problem that reliably evaluates the randomness:
\begin{equation}\label{eq:sdp_pri_finite}
\begin{aligned}
\max_{\{\rho_{k,PP}\}} \quad & \sum_{k=1}^m \Tr(\rho_{k,PP} M_{k,PP}) + 1 - \sum_{k=1}^m \Tr(\rho_{k,PP}) \\
&+ 2\sqrt{s_N} \frac{\sqrt{\mu^\mathrm{U} (N-\mu^\mathrm{U})}}{N} \\
\mathrm{s.t.} \quad & \Tr\Big(M_{j,PP}\sum_{k=1}^m\rho_{k,PP}\Big) \ge p_j - \frac{\mu^\mathrm{U}}{N} \| M_{j,QQ}\|_\infty \\
&\quad - 2\sqrt{s_j}\frac{\sqrt{\mu^\mathrm{U} (N-\mu^\mathrm{U})}}{N}, \quad \forall j \\
& \Tr\Big(M_{j,PP}\sum_{k=1}^m\rho_{k,PP}\Big) \le p_j \\
&\quad + 2\sqrt{s_j}\frac{\sqrt{\mu^\mathrm{U} (N-\mu^\mathrm{U})}}{N}, \quad \forall j \\
& \Tr\Big(\sum_{k=1}^m \rho_{k, PP}\Big) \le 1, \\
& \Tr\left(\hat{n}\sum_{k=1}^m \rho_{k, PP}\right) \leq \mu^\mathrm{U}, \\
& \rho_{k,PP} \succeq 0, \quad \forall k.
\end{aligned}
\end{equation}
Note that the upper bound in the first constraint does not require the $\|M_{j,QQ}\|_\infty$ term, as $\Tr(M_{j,QQ}\rho_{QQ}) \ge 0$ can only increase the expectation value.

\section{Experimental Implementation}
In this section, we provide a concrete example to demonstrate the application of our method: the POVM is chosen to be a single-quadrature measurement via unbalanced homodyne detection (UBHD).
To strictly bound the randomness away from zero, the maximum eigenvalues of the truncated POVM elements must be bounded away from unity ($s_j < 1$).
Ideal balanced homodyne detection implies perfect projective measurements onto highly squeezed quadrature eigenstates, inevitably leading to $s_j \to 1$, which reduces the min-entropy to zero.
Hence, we exploit a realistic and physically advantageous scheme: UBHD coupled with realistic photodiode inefficiency.
Suppose the unknown signal state interferes with a strong local oscillator (LO) at a beam splitter with an asymmetric transmittance $T \neq 0.5$.
Furthermore, we abandon the idealized lossless assumption and model the physical scenario where both commercial photodiodes exhibit an identical finite quantum efficiency, $\eta < 1$.
In this realistic configuration, the physical loss $\eta$ is modeled by introducing virtual beam splitters on the detection arms, coupling the modes to independent vacuum states. We assume Eve cannot access the internal vacuum ports of Alice's detection apparatus, i.e., she cannot possess the purification of these lost photons. Consequently, this trusted loss cannot be exploited by Eve to gather additional information. Instead, it acts as a secure noise mechanism that physically convolves with the measurement operators. Intuitively, the beam splitter unbalancing penalty together with the trusted loss form a Gaussian convolution which blurs the ideally sharp quadrature projectors into overlapping, unsharp distributions. This physically induced fuzziness prevents the POVM elements from acting as mutually exclusive pure projections, thereby compressing their maximum eigenvalues below unity.

When the commercial detector performs a direct 1:1 subtraction of the photocurrents, the interference between the macroscopic LO amplitudes and the signal modes yields a dominant noise contribution.
The total variance of these combined, independent noise operators is mathematically equivalent to an effective total system efficiency, $\eta_{\mathrm{sys}}$.
By equating the derived noise variance to a pure loss channel, the system efficiency is revealed to be the strict product of the physical detector efficiency and the beam splitter unbalancing penalty:
\begin{equation}\label{eq:totalloss}
\eta_{\mathrm{sys}} = \eta \times 4T(1-T).
\end{equation}
And the matrix form of the discrete POVM elements is given by
\begin{equation}\label{eq:POVM_UHD}
    \begin{aligned}
       M_j =& \frac{1}{2} \int_{-\infty}^{\infty} dy \left[ \text{erf}\left( \frac{x_{j+1} - \sqrt{\eta_{sys}}y}{\sqrt{2(1-\eta_{sys})V_0}} \right) \right.\\
       &- \left.\text{erf}\left( \frac{x_j - \sqrt{\eta_{sys}}y}{\sqrt{2(1-\eta_{sys})V_0}} \right) \right] |y\rangle\langle y|,
    \end{aligned}
\end{equation}
where we take $V_0 =1/2$ as the variance of vacuum fluctuation.
We leave the detailed derivation of Eqs.~\eqref{eq:totalloss} and \eqref{eq:POVM_UHD} in Appendix~\ref{app:POVM_UHD}.

\begin{figure}[htp]
\centering
\includegraphics[width=0.5\textwidth]{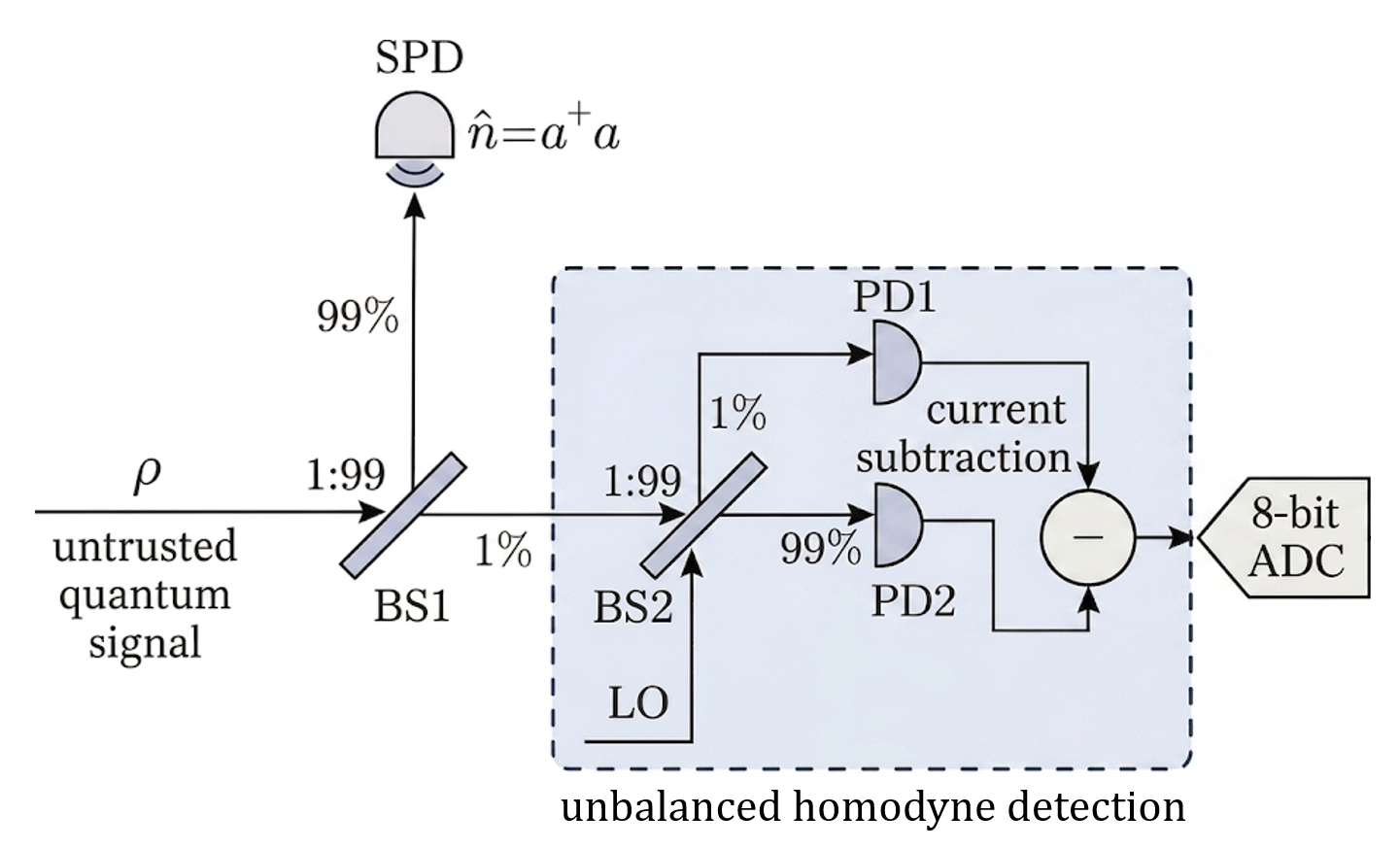}
\caption{ Experimental setup of the CV-SI-QRNG protocol featuring unbalanced homodyne detection (UBHD). The untrusted quantum signal ($\rho$) first passes through an asymmetric beam splitter (BS1) with a 1:99 splitting ratio. The 99\% port is directed to a single-photon detector (SPD) for real-time, highly sensitive photon number monitoring. The remaining 1\% of the signal enters the UBHD module, interfering with a local oscillator (LO) at a second 1:99 beam splitter (BS2). The interference signals are captured by two photodetectors (PD1 and PD2), followed by current subtraction to eliminate common-mode noise. The continuous quadrature outcomes are finally digitized by an 8-bit analog-to-digital converter (ADC) for randomness extraction.}
\label{fig:exp_setting}
\end{figure}


\begin{figure*}[htbp]
    \centering
    \begin{minipage}{0.48\textwidth}
        \centering
        \includegraphics[width=\linewidth]{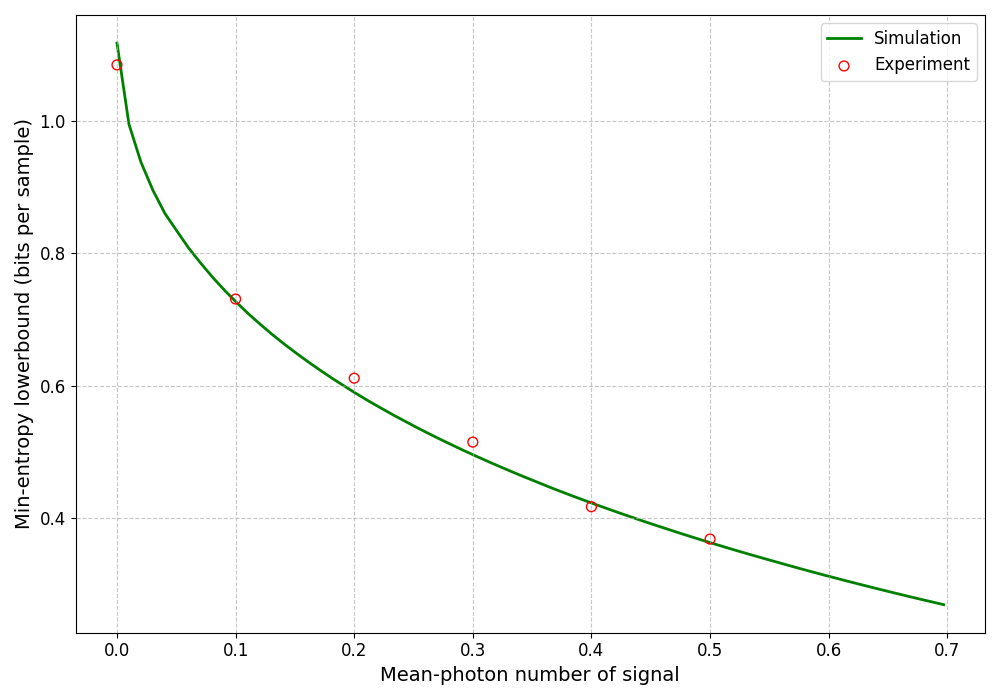} 
    \end{minipage}
    \begin{minipage}{0.48\textwidth}
        \centering
        \includegraphics[width=\linewidth]{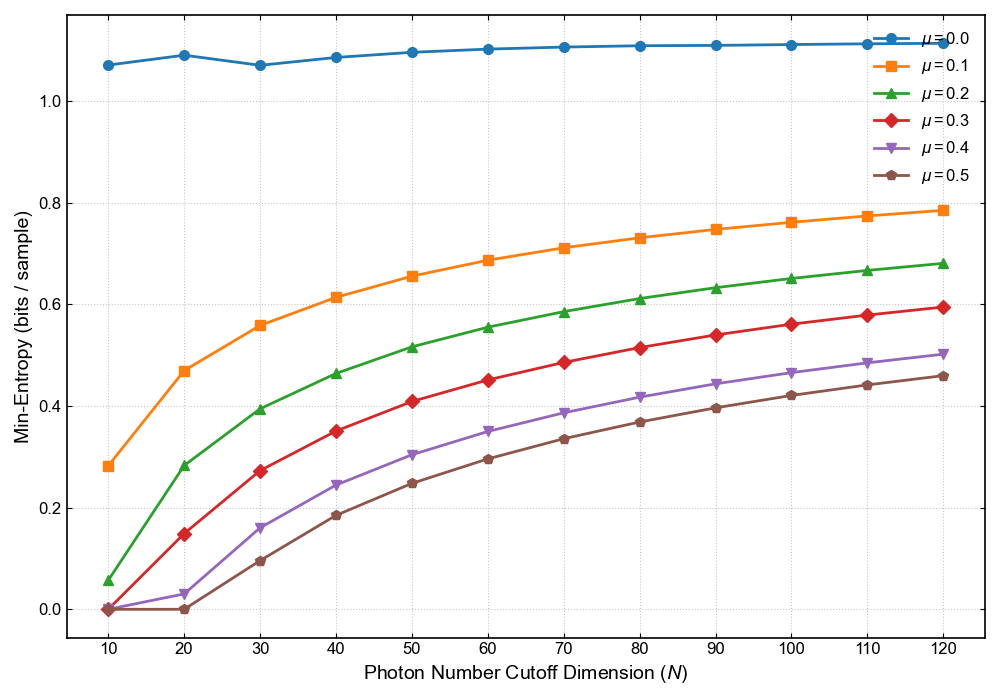}
    \end{minipage}
    \caption{Left: Min-entropy bound versus the mean photon number of the signal ($\mu$). Green line and red points denote the theoretical simulation bounds and the experimental results $(\mu \in \{0,0.1,0.2,0.3,0.4,0.5 \} )$, respectively. We set the photon number cutoff $N=80$ in simulation and experimental cases. Right: Min-entropy bound versus the photon number cutoff for experimental data with different mean photon numbers of signals.}
    \label{fig:sim_vs_exp}
\end{figure*}

In our experimental demonstration, we employ vacuum and weak coherent state light sources with mean photon numbers 0.1, 0.2, 0.3, 0.4, and 0.5 as the untrusted sources. To rigorously bound the maximum photon number entering the randomness generation loop, the source is first routed through a beam splitter with an asymmetric splitting ratio of $1:99$. The $99\%$ port is directed to a single-photon detector (Thorlabs ID cube, low noise version) for real-time intensity monitoring. This monitoring detector features a detection efficiency of $15\%$ and a dark count rate of $3$~kHz. The remaining $1\%$ of the signal enters the unbalanced homodyne detection (UBHD) setup. The optical signals are then measured using a commercial balanced amplified photodetector (Thorlabs PDB480C-AC), which features a responsivity of $0.9$~A/W at $1550$~nm and a wide bandwidth of $30$~kHz to $1.6$~GHz. Prior to randomness generation, a rigorous calibration is performed: the local oscillator (LO) is blocked to accurately measure the electronic noise variance. The pure quantum signal variance is isolated by subtracting this electronic noise variance from the total noise variance, and is subsequently calibrated against standard vacuum fluctuations (assuming the theoretical vacuum variance is $V_0 = 1/2$). The experimental settings are illustrated in Fig.~\ref{fig:exp_setting}.

For the randomness extraction, the continuous quadrature measurement outcomes are discretized into $8$ distinct intervals: $(-\infty, -3.5]$, $[-3.5,-2.33]$, $[-2.33,-1.17]$, $[-1.17,0]$, $[0,1.17]$, $[1.17,2.33]$, $[2.33,3.50]$ and $[3.50,\infty)$. Based on a robust dataset consisting of $2 \times 10^8$ experimental samples for each intensity and utilizing the bounded intensity $\mu^{\mathrm{U}}$ observed in experiments, we calculate the extractable secure randomness to be $1.08$, $0.73$, $0.61$,$0.51$,$0.41$, and $0.36$ bits per sample with $N=80$, corresponding to $\mu = 0, 0.1, 0.2, 0.3, 0.4, 0.5$, respectively (see Fig~\ref{fig:sim_vs_exp}). We also provide a convergence plot showing the min-entropy lower bound varies with the photon number cutoff, achieving a highest randomness of $1.11$ with $\mu=0$ and $N=120$. In the calculation above, we neglect the finite-size effect and take the experimental statistics as the probability value for simplicity. Leveraging the $1.6$~GHz high-speed bandwidth of our detection system, this per-sample entropy enables our continuous-variable source-independent quantum random number generator to reach an ultra-high generation rate of up to $1.776$~Gbps ($1.11 \times 1.6$~Gbps). Raw experimental data are provided in Appendix~\ref{app:exp_data}.

\section{conclusion}
In conclusion, we have established a rigorous numerical security proof framework for CV-SI-QRNGs that accommodates general POVMs. This method faithfully maps the intractable infinite-dimensional SDP into a solvable finite-dimensional space, preserving the information-theoretic security of the guessing probability upper bound. We successfully applied this framework to an unbalanced homodyne detection scheme subjected to physical loss. Our experimental demonstration using vacuum and weak coherent states yields a secure randomness extraction of up to 1.11 bits per sample. This confirms an ultra-high generation rate of 1.776 Gbps at a 1.6 GHz detection bandwidth, demonstrating exceptional robustness even with imperfect hardware.

Looking forward, the versatility of this framework opens new avenues for innovative protocol design. Notably, our approach permits the certification of secure randomness using only a single-quadrature measurement, bypassing the traditional theoretical requirement of measuring two conjugate quadratures (such as $X$ and $P$). For future directions, we expect this critical advantage can further simplify the required experimental architecture and reduces hardware complexity, enabling faster, more robust, and highly practical quantum random number generators. 

\section*{acknowledgment}
H.Z. acknowledges funding from National Natural Science Foundation of China Grants No. 92465202, and the Innovation Funding of ICT, CAS under Grant No. E561150.

\onecolumngrid
\appendix

\section{Derivation of POVMs of a lossy unbalanced homodyne detection}\label{app:POVM_UHD}
In a realistic experimental UBHD setup, the BS transmissivity $T \neq 1/2$. The signal is $\hat{a}$, and the strong LO is $\hat{b} \approx \alpha + \delta\hat{b}$ (with $\alpha = |\alpha|e^{i\theta}$). The output modes are $\hat{c} = \sqrt{T}\hat{a} + \sqrt{1-T}\hat{b}$ and $\hat{d} = \sqrt{1-T}\hat{a} - \sqrt{T}\hat{b}$.
By expanding the photocurrent operators $\hat{I}_c = \hat{c}^\dagger \hat{c}$ and $\hat{I}_d = \hat{d}^\dagger \hat{d}$ and neglecting the extremely weak $\hat{a}^\dagger\hat{a}$ and $\delta\hat{b}^\dagger\delta\hat{b}$ terms, we obtain:
\begin{align}
    \hat{I}_c &\approx (1-T)|\alpha|^2 + (1-T)|\alpha|\hat{x}_{\delta b} + \sqrt{T(1-T)}|\alpha|\hat{x}_\theta \\
    \hat{I}_d &\approx T|\alpha|^2 + T|\alpha|\hat{x}_{\delta b} - \sqrt{T(1-T)}|\alpha|\hat{x}_\theta,
\end{align}
where $\hat{x}_{\delta b}$ is the LO shot noise quadrature, and $\hat{x}_\theta$ is the signal quadrature.
A commercial balanced detector performs direct subtraction: $\Delta \hat{I} = \hat{I}_c - \hat{I}_d$.
\begin{equation}
    \Delta \hat{I} = (1-2T)|\alpha|^2 + (1-2T)|\alpha|\hat{x}_{\delta b} + 2\sqrt{T(1-T)}|\alpha|\hat{x}_\theta.
\end{equation}
In standard operation, an AC-coupling circuit (high-pass filter) blocks the large macroscopic DC term $(1-2T)|\alpha|^2$. The remaining fluctuating AC signal is:
\begin{equation}
    \delta(\Delta \hat{I}) = 2\sqrt{T(1-T)}|\alpha|\hat{x}_\theta + (1-2T)|\alpha|\hat{x}_{\delta b}.
\end{equation}
To obtain the normalized quadrature measurement $\hat{x}_{meas}$, we divide this signal by the gain coefficient of the signal term $2\sqrt{T(1-T)}|\alpha|$:
\begin{equation}
    \hat{x}_{meas} = \hat{x}_\theta + \frac{1-2T}{2\sqrt{T(1-T)}}\hat{x}_{\delta b}.
\end{equation}
The operator $\hat{x}_{meas}$ consists of the ideal signal $\hat{x}_\theta$ corrupted by the un-cancelled LO shot noise. Since $\hat{x}_{\delta b}$ is a standard vacuum mode with variance $V_0$, the effective noise variance added to the measurement is:
\begin{equation}
    V_{LO\_noise} = \left( \frac{1-2T}{2\sqrt{T(1-T)}} \right)^2 V_0 = \frac{(1-2T)^2}{4T(1-T)} V_0 = \frac{1 - 4T + 4T^2}{4T(1-T)} V_0 .
\end{equation}
Let us define an effective detection efficiency $\eta_{UB}$. A pure loss channel $\eta$ introduces a noise variance of $\frac{1-\eta}{\eta}V_0$. Equating these:
\begin{equation}
    \frac{1-\eta_{UB}}{\eta_{UB}} = \frac{1 - 4T(1-T)}{4T(1-T)},
\end{equation}
which implies
\begin{equation}
   \eta_{UB} = 4T(1-T).
\end{equation}

Thus, realistic UBHD with direct subtraction is mathematically perfectly equivalent to an ideal balanced homodyne detector suffering a transmission loss of $\eta_{UB} = 4T(1-T)$,
\begin{equation} \label{eq:POVM2}
    M_j^{\text{ideal}} = \frac{1}{2} \int_{-\infty}^{\infty} dy \left[ \text{erf}\left( \frac{x_{j+1} - \sqrt{\eta_{UB}}y}{\sqrt{2(1-\eta_{UB})V_0}} \right) - \text{erf}\left( \frac{x_j - \sqrt{\eta_{UB}}y}{\sqrt{2(1-\eta_{UB})V_0}} \right) \right] |y\rangle\langle y|.
\end{equation}

Now we assume both photodiodes have an identical physical quantum efficiency $\eta$. We model this by placing virtual BSs (transmissivity $\eta$) on modes $\hat{c}$ and $\hat{d}$, coupling them to independent vacuum modes $\hat{v}_1$ and $\hat{v}_2$. 
The dominant noise contribution from the physical loss arises from the interference between these vacuum modes and the macroscopic classical amplitudes of the LO.
The classical amplitude in arm $c$ is $\sqrt{1-T}\alpha$, and in arm $d$ is $-\sqrt{T}\alpha$. The lossy photocurrents are:
\begin{align}
    \hat{I}_c' &\approx \eta \hat{I}_c + \sqrt{\eta(1-\eta)}\sqrt{1-T}|\alpha|\hat{x}_{v1} \\
    \hat{I}_d' &\approx \eta \hat{I}_d - \sqrt{\eta(1-\eta)}\sqrt{T}|\alpha|\hat{x}_{v2}.
\end{align}
Performing direct subtraction $\Delta \hat{I}' = \hat{I}_c' - \hat{I}_d'$, applying AC-coupling to remove the DC term, and substituting the expanded expressions for $\hat{I}_c$ and $\hat{I}_d$, we get:
\begin{equation}
    \delta(\Delta \hat{I}') = \eta \left[ 2\sqrt{T(1-T)}|\alpha|\hat{x}_\theta + (1-2T)|\alpha|\hat{x}_{\delta b} \right] + \sqrt{\eta(1-\eta)}|\alpha| \left( \sqrt{1-T}\hat{x}_{v1} + \sqrt{T}\hat{x}_{v2} \right).
\end{equation}
We normalize the measurement by dividing by the overall signal gain coefficient $2\eta\sqrt{T(1-T)}|\alpha|$:
\begin{equation}
    \hat{x}_{meas}' = \hat{x}_\theta + \frac{1-2T}{2\sqrt{T(1-T)}}\hat{x}_{\delta b} + \frac{\sqrt{\eta(1-\eta)}}{2\eta\sqrt{T(1-T)}} \left( \sqrt{1-T}\hat{x}_{v1} + \sqrt{T}\hat{x}_{v2} \right).
\end{equation}

The noise modes $\hat{x}_{\delta b}$, $\hat{x}_{v1}$, and $\hat{x}_{v2}$ are mutually independent vacuums, each with variance $V_0$. The total variance of the combined noise operators is the sum of their individual variances.
The variance of the $(\sqrt{1-T}\hat{x}_{v1} + \sqrt{T}\hat{x}_{v2})$ term is $(1-T)V_0 + TV_0 = V_0$. Thus:
\begin{align}
    V_{total\_noise} &= \left[ \frac{(1-2T)^2}{4T(1-T)} + \frac{\eta(1-\eta)}{4\eta^2 T(1-T)} \right] V_0 \nonumber \\
    &= \left[ \frac{\eta(1-2T)^2 + (1-\eta)}{4\eta T(1-T)} \right] V_0 \nonumber \\
    &= \left[ \frac{\eta - 4\eta T + 4\eta T^2 + 1 - \eta}{4\eta T(1-T)} \right] V_0 \nonumber \\
    &= \frac{1 - 4\eta T(1-T)}{4\eta T(1-T)} V_0.
\end{align}
Let us define the total effective system efficiency $\eta_{sys}$. Using the pure loss variance relation $V = \frac{1-\eta_{sys}}{\eta_{sys}}V_0$:
\begin{equation}
    \frac{1-\eta_{sys}}{\eta_{sys}} = \frac{1 - \eta [4T(1-T)]}{\eta [4T(1-T)]}
\end{equation}
This immediately reveals that the total system efficiency is the product of the physical efficiency and the unbalancing penalty:
\begin{equation}
    \eta_{sys} = \eta \times 4T(1-T) = \eta \cdot \eta_{UB}.
\end{equation}
Therefore, the discrete POVM element for a realistic UBHD with direct subtraction and physical detector loss $\eta$ is:
\begin{equation} \label{eq:POVM3}
    M_j = \frac{1}{2} \int_{-\infty}^{\infty} dy \left[ \text{erf}\left( \frac{x_{j+1} - \sqrt{\eta_{sys}}y}{\sqrt{2(1-\eta_{sys})V_0}} \right) - \text{erf}\left( \frac{x_j - \sqrt{\eta_{sys}}y}{\sqrt{2(1-\eta_{sys})V_0}} \right) \right] |y\rangle\langle y|.
\end{equation}

\section{Simulation formulae}\label{app:simulation_foemulae}
In this section, we give the simulation formulae of conditional probabilities $p_j$ and $\mu^\mathrm{U}$ involved in the SDP problem. Compared with balanced homodyne detection, the measurement outcome of unbalanced homodyne detection follows a broadened Gaussian distribution. For an input coherent state $\ket{\sqrt{\mu}}$ utilized for source characterization, the actual measurement yields a Gaussian distribution with its effective mean shifted by the total system efficiency, while maintaining the standard vacuum variance:
\begin{equation}
   P(x) = \frac{1}{\sqrt{\pi}} \exp\left[ - \left(x - \sqrt{2\eta_{\mathrm{sys}}\mu}\right)^2 \right].
\end{equation}
The corresponding experimental statistics $p_j$ utilized for the finite-dimensional SDP constraints in Eq.~\eqref{eq:sdp_pri_finite} are directly given by integrating this distribution:
\begin{equation}
p_k = \int_{x^{(k)}}^{x^{(k+1)}} P(x) dx.
\end{equation}

Then we estimate $\mu^\mathrm{U}$. In a practical implementation, the inherent imperfections of Single-Photon Detectors (SPDs), specifically dark counts, must be taken into account when estimating the upper bound of the mean photon number. 
Let $p_d$ denote the dark count probability per measurement gate and $\eta$ denote the overall detection efficiency. The probability of the SPD not clicking within a single gate, considering both the input signal with mean photon number $\mu$ and the dark counts, is given by $P_{\text{no\_click}} = e^{-\eta\mu} (1 - p_d)$. Since the dark count probability is extremely small ($p_d \ll 1$), this can be well approximated by:
\begin{equation}
    P_{\text{no\_click}} \approx e^{-\eta\mu} e^{-p_d} = e^{-(\eta\mu + p_d)}.
\end{equation}
If we model this combined effect as an equivalent ideal detector responding to a stronger virtual coherent state with an effective mean photon number $\mu^\mathrm{U}$, we have $e^{-\eta\mu^\mathrm{U}} = e^{-(\eta\mu + p_d)}$, which implies:
\begin{equation}
    \mu^\mathrm{U}= \mu + \frac{p_d}{\eta}.
\end{equation}
Therefore, the dark counts introduce an equivalent excess mean photon number $\mu_{d} = p_d / \eta$. This excess noise applies to both the vacuum state ($\mu^\mathrm{U} = \mu_{d}$) and the weak coherent state ($\mu^\mathrm{U} = \mu + \mu_{d}$).

In our experiment, the SPD has a detection efficiency of $\eta = 15\%$. Based on the detector's characteristics and the measurement time window (resulting from a raw dark count rate of $3\text{ kHz}$), the dark count probability per gate is characterized as $p_d = 6 \times 10^{-5}$. Substituting these experimental parameters into the model, the equivalent maximum excess mean photon number introduced by the detector's dark counts is estimated to be:
\begin{equation}
    \mu_{d} = \frac{6 \times 10^{-5}}{0.15} = 4 \times 10^{-4}.
\end{equation}
This excess value is rigorously incorporated into the estimation of $\mu^\mathrm{U}$ to guarantee the source-independent security of the generated random numbers against any potential manipulation of the background noise by an eavesdropper.

By incorporating these analytical matrix elements alongside the observable energy constraint $\mathrm{Tr}(\hat{n}\rho) \le \mu^\mathrm{U}$ which effectively bounds the high-energy tails, we can efficiently and securely compute the randomness lower bound via the proposed SDP framework, accurately reflecting the physical limitations of the optoelectronic components.

\section{Experimental Data}\label{app:exp_data}

In our experiment, the raw quadrature data is collected using an 8-bit analog-to-digital converter (ADC), which originally yields $2^8 = 256$ discrete outcome values. To efficiently extract secure randomness while maintaining a computationally tractable SDP formulation, we apply a classical coarse-graining post-processing step to group these 256 discrete values into $m=8$ bins. 

In our experiment, we choose uniformly distributed boundaries, $(-\infty, -3.5]$, $[-3.5,-2.33]$, $[-2.33,-1.17]$, $[-1.17,0]$, $[0,1.17]$, $[1.17,2.33]$, $[2.33,3.50]$ and $[3.50,\infty)$, for simplicity. Actually, our framework allows arbitrary divisions to effectively squash the maximum eigenvalues of the POVM elements, which strictly restricts the eavesdropper's maximum guessing probability. Since this coarse-graining is performed entirely in the classical data post-processing stage, the implementation of non-uniform intervals is perfectly valid and introduces no security loopholes. We leave the optimization of non-uniform division for future work.


The detailed probability distributions---including the experimental probability, theoretical probability, and absolute error---across the 8 optimized bins for different input signal intensities ($\mu \in \{0, 0.1, 0.2, 0.3, 0.4, 0.5\}$) are summarized in Table~\ref{tab:raw_data}.

\begin{table*}[htp!]
\centering
\caption{Probability distributions of the quadrature measurement outcomes across 8 discrete intervals for different mean photon numbers ($\mu$).}
\label{tab:raw_data}
\renewcommand{\arraystretch}{1.2}
\begin{tabular}{c c | c c c | c c c}
\hline\hline
\multirow{2}{*}{\textbf{Bin}} & \multirow{2}{*}{\textbf{Interval}} & \multicolumn{3}{c|}{$\bm{\mu = 0}$} & \multicolumn{3}{c}{$\bm{\mu = 0.1}$} \\
\cline{3-8}
 & & Exp. Prob. & Theo. Prob. & Abs. Error & Exp. Prob. & Theo. Prob. & Abs. Error \\
\hline
1 & $[-\infty, -3.50]$ & $0.000000$ & $0.000000$ & $3.72 \times 10^{-7}$ & $0.000004$ & $0.000000$ & $4.10 \times 10^{-6}$ \\
2 & $[-3.50, -2.33]$   & $0.000660$ & $0.000483$ & $1.77 \times 10^{-4}$ & $0.000535$ & $0.000289$ & $2.45 \times 10^{-4}$ \\
3 & $[-2.33, -1.17]$   & $0.048600$ & $0.048996$ & $3.96 \times 10^{-4}$ & $0.039098$ & $0.036330$ & $2.77 \times 10^{-3}$ \\
4 & $[-1.17, 0.00]$    & $0.450340$ & $0.450520$ & $1.80 \times 10^{-4}$ & $0.406580$ & $0.407149$ & $5.69 \times 10^{-4}$ \\
5 & $[0.00, 1.17]$     & $0.449000$ & $0.450520$ & $1.52 \times 10^{-3}$ & $0.487333$ & $0.490517$ & $3.18 \times 10^{-3}$ \\
6 & $[1.17, 2.33]$     & $0.050420$ & $0.048996$ & $1.42 \times 10^{-3}$ & $0.065112$ & $0.064921$ & $1.91 \times 10^{-4}$ \\
7 & $[2.33, 3.50]$     & $0.000960$ & $0.000483$ & $4.77 \times 10^{-4}$ & $0.001333$ & $0.000792$ & $5.40 \times 10^{-4}$ \\
8 & $[3.50, \infty]$   & $0.000020$ & $0.000000$ & $1.96 \times 10^{-5}$ & $0.000005$ & $0.000001$ & $4.71 \times 10^{-6}$ \\
\hline\hline

\multirow{2}{*}{\textbf{Bin}} & \multirow{2}{*}{\textbf{Interval}} & \multicolumn{3}{c|}{$\bm{\mu = 0.2}$} & \multicolumn{3}{c}{$\bm{\mu = 0.3}$} \\
\cline{3-8}
 & & Exp. Prob. & Theo. Prob. & Abs. Error & Exp. Prob. & Theo. Prob. & Abs. Error \\
\hline
1 & $[-\infty, -3.50]$ & $0.000002$ & $0.000000$ & $1.44 \times 10^{-6}$ & $0.000001$ & $0.000000$ & $1.28 \times 10^{-6}$ \\
2 & $[-3.50, -2.33]$   & $0.000434$ & $0.000170$ & $2.64 \times 10^{-4}$ & $0.000415$ & $0.000098$ & $3.17 \times 10^{-4}$ \\
3 & $[-2.33, -1.17]$   & $0.035557$ & $0.026463$ & $9.09 \times 10^{-3}$ & $0.031233$ & $0.018933$ & $1.23 \times 10^{-2}$ \\
4 & $[-1.17, 0.00]$    & $0.389619$ & $0.362016$ & $2.76 \times 10^{-2}$ & $0.371497$ & $0.316655$ & $5.48 \times 10^{-2}$ \\
5 & $[0.00, 1.17]$     & $0.508284$ & $0.525550$ & $1.73 \times 10^{-2}$ & $0.518492$ & $0.554149$ & $3.57 \times 10^{-2}$ \\
6 & $[1.17, 2.33]$     & $0.064860$ & $0.084525$ & $1.97 \times 10^{-2}$ & $0.076853$ & $0.108148$ & $3.13 \times 10^{-2}$ \\
7 & $[2.33, 3.50]$     & $0.001240$ & $0.001275$ & $3.54 \times 10^{-5}$ & $0.001502$ & $0.002013$ & $5.11 \times 10^{-4}$ \\
8 & $[3.50, \infty]$   & $0.000005$ & $0.000002$ & $3.04 \times 10^{-6}$ & $0.000005$ & $0.000003$ & $1.61 \times 10^{-6}$ \\
\hline\hline

\multirow{2}{*}{\textbf{Bin}} & \multirow{2}{*}{\textbf{Interval}} & \multicolumn{3}{c|}{$\bm{\mu = 0.4}$} & \multicolumn{3}{c}{$\bm{\mu = 0.5}$} \\
\cline{3-8}
 & & Exp. Prob. & Theo. Prob. & Abs. Error & Exp. Prob. & Theo. Prob. & Abs. Error \\
\hline
1 & $[-\infty, -3.50]$ & $0.000001$ & $0.000000$ & $1.08 \times 10^{-6}$ & $0.000001$ & $0.000000$ & $1.14 \times 10^{-6}$ \\
2 & $[-3.50, -2.33]$   & $0.000371$ & $0.000055$ & $3.16 \times 10^{-4}$ & $0.000358$ & $0.000031$ & $3.27 \times 10^{-4}$ \\
3 & $[-2.33, -1.17]$   & $0.027956$ & $0.013304$ & $1.47 \times 10^{-2}$ & $0.027948$ & $0.009180$ & $1.88 \times 10^{-2}$ \\
4 & $[-1.17, 0.00]$    & $0.375430$ & $0.272444$ & $1.03 \times 10^{-1}$ & $0.360664$ & $0.230539$ & $1.30 \times 10^{-1}$ \\
5 & $[0.00, 1.17]$     & $0.514539$ & $0.575065$ & $6.05 \times 10^{-2}$ & $0.527303$ & $0.587361$ & $6.01 \times 10^{-2}$ \\
6 & $[1.17, 2.33]$     & $0.080134$ & $0.136004$ & $5.59 \times 10^{-2}$ & $0.082102$ & $0.168128$ & $8.60 \times 10^{-2}$ \\
7 & $[2.33, 3.50]$     & $0.001563$ & $0.003121$ & $1.56 \times 10^{-3}$ & $0.001618$ & $0.004750$ & $3.13 \times 10^{-3}$ \\
8 & $[3.50, \infty]$   & $0.000005$ & $0.000006$ & $1.08 \times 10^{-6}$ & $0.000007$ & $0.000011$ & $4.47 \times 10^{-6}$ \\
\hline\hline
\end{tabular}
\end{table*}

\bibliography{bibsemidiqrng}

\end{document}